\documentclass{ws-ijgmmp}
\begin{document}
\newcommand{\G}{{\cal G}}

\markboth{ S.Capozziello, C.A. Mantica, L.G. Molinari}
{Gauss-Bonnet perfect fluids}

%
\catchline{}{}{}{}{}
%

\title{COSMOLOGICAL PERFECT FLUIDS \\IN GAUSS-BONNET GRAVITY}

\author{SALVATORE CAPOZZIELLO}
\address{
Dipartimento di Fisica ``E. Pancini",  Universit\`a degli Studi di Napoli
	\textquotedblleft{Federico II}\textquotedblright, and\\
INFN Sez. di Napoli, Compl. Univ. di Monte S. Angelo, Edificio G, Via Cinthia, I-80126, Napoli, Italy,\\
Tomsk State Pedagogical University, ul. Kievskaya, 60, 634061 Tomsk, Russia.\\
\email{capozziello@na.infn.it}
}

\author{CARLO ALBERTO MANTICA}
\address{I.I.S. Lagrange, Via L. Modignani 65, 
I-20161, Milano, Italy \\
and INFN sez. di Milano,
Via Celoria 16, I-20133 Milano, Italy\\
\email{carlo.mantica@mi.infn.it}
}

\author{LUCA GUIDO MOLINARI}
\address{Dipartimento di Fisica ``A. Pontremoli'',
Universit\`a degli Studi di Milano\\ and INFN sez. di Milano,
Via Celoria 16, I-20133 Milano, Italy\\
\email{luca.molinari@unimi.it}
}

\maketitle

\begin{history}
\received{(Day Month Year)}
\revised{(Day Month Year)}
\end{history}

\begin{abstract}
In a $n$-dimensional Friedmann-Robertson-Walker metric, it is rigorously shown that any 
analytical theory of gravity $f(R,\G)$, where $R$ is the curvature scalar and $\G$ is
the Gauss-Bonnet topological invariant, can be associated to a perfect-fluid stress-energy tensor.
In this perspective, dark components of the cosmological Hubble flow can be geometrically interpreted.
\end{abstract}
\keywords{Cosmology, modified gravity, Gauss-Bonnet topological invariant.\\ \\
AMSC: 83C40, 83C05, 83C10.}

\section{Introduction}
The big puzzle of modern cosmology is the fact that most of the 95\% of the {\it cosmic pie} is composed by dark energy and dark matter. Despite the huge experimental efforts to detect these elusive components at fundamental level, there is no final evidence, at the moment, confirming their particle nature.
On the other hand, clustering phenomena at any galactic and extragalactic scale and accelerated expansion of the Hubble flow point out  that the  paradigm based on the General Relativity and the Standard Model of Particles is not sufficient to explain the  observations. According to this state of art, it is mandatory to search for alternative explanations in order to address the cosmological phenomenology unless incontrovertible tests for new particles are found. 

Specifically, observations indicate a spatially flat homogeneous and isotropic universe on  large scale. This  property  is described
by the Friedmann-Robertson-Walker (FRW) metric. The observed facts require that the  Einstein equations 
$ R_{ij} - \tfrac{1}{2}  g_{ij} R= \kappa T_{ij} $ 
contain a stress-energy tensor $T_{ij}=T_{ij}^{matt} + T_{ij}^{dark}$ of a perfect fluid that accounts for ordinary matter as well as
 dark matter and dark energy. 
However, the  alternative road consists in explaining  the dark fluid $T_{ij}^{dark}$ as a geometrical effect. In other words, one can attempt at reproducing the observed facts by modifying the left-hand side
of the equations of gravity, within the appropriate frame of FRW space-times. This is done by replacing
the Hilbert-Einstein action, linear in the Ricci scalar $R$,  with   approaches where more general curvature  
 \cite{CF08,CDL11,Nojiri17} or torsion  \cite{Manos} invariants are considered into the dynamics. The so called $f(R)$ gravity has been the first example in this direction \cite{Cap}.

With this premise, it has been proved that the stress-energy tensor has the perfect-fluid form
in any $f(R)$ cosmology  in generalized FRW space-times with harmonic Weyl tensor \cite{CMM2018}. This means that the additional  terms, produced by extending geometry (that is from $R$ to functions  $f(R)$, non-linear in $R$), sum up to a
perfect-fluid tensor that may well fit as a source term in the right-hand side of the Einstein equations.\\
In this paper, the result is generalized to  $f (R,\G)$  theories of gravitation in FRW space-times, where $f (R,\G)$ is an 
arbitrary  analytical function of the scalar curvature $R$ and of the Gauss-Bonnet topological invariant 
\begin{align}\label{2.2}
\G=R_{jklm}R^{jklm} -4 R_{jk}R^{jk} +R^2\,.
\end{align}
Modified gravity with Gauss-Bonnet scalar in the form $R+f(\G)$ was first introduced   in the context of FRW metric, as alternative to dark energy for the late acceleration of the universe \cite{Nojiri05}. Several other investigations followed \cite{Cognola06,Santos18,Benetti18, MF, Odintsov17} because the Gauss-Bonnet term results useful to regularize the gravitational theory for quantum fields in curved spaces \cite{CDL11} and  improves the efficiency of inflation giving rise to multiple accelerated expansions \cite{Paolella} because $\G$ behaves as a further {\it scalaron} besides $R$ \cite{Starobinsky80}.  

In $n$ dimensions the gravitational action  is  
\begin{align}\label{2.1}
S=\frac{1}{2\kappa}\int d^nx \sqrt{-g} f(R,{\G}) +S^{matt} ,
\end{align}
where $S^{matt}$ is the action term of standard matter fields. The first variation in the metric gives the field equations
\begin{align}\label{2.3}
R_{kl} -\tfrac{1}{2}  g_{kl} R= \kappa T^{matt}_{kl} + \Sigma_{kl}
\end{align}
where $T^{matt}_{kl} $ results from $S^{matt}$ and the tensor $\Sigma_{kl}$ arises from the geometry.  
Assuming $f_R = \partial_R f $ and $f_{\G} =\partial_\G f $, the latter is \cite{Atazadeh14}:
\begin{align}
\Sigma_{kl} =&( \nabla_k\nabla_l  -g_{kl}\nabla^2) f_R  + 2R (\nabla_k\nabla_l  - g_{kl}\nabla^2) f_\G -4(R_k{}^m\nabla_m\nabla_l +R_l{}^m\nabla_m\nabla_k)f_\G \nonumber\\
& +4 (R_{kl}\nabla^2  + g_{kl} R^{pq}\nabla_p\nabla_q  +  R_{kpql}\nabla^p\nabla^q) f_\G     -\tfrac{1}{2}g_{kl}(R f_R+\G f_\G -  f)       \label{Sigma} \\
& + (1-f_R) (R_{kl}-\tfrac{1}{2} g_{kl}R) \nonumber
\end{align}
Despite of the complexity of the expression, we prove the following:
\begin{theorem}
In a Friedmann-Robertson-Walker space-time of dimension $n$, for any analytical $f(R,\G)$ model of gravity, 
the tensor $\Sigma_{kl}$ is a  perfect fluid of the form
\begin{align} \frac{1}{\kappa}\Sigma_{kl}= (p+\mu) u_ku_l+p g_{kl} 
\label{perfect}
\end{align}
\end{theorem}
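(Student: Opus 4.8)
The plan is to show that each term in the expression \eqref{Sigma} for $\Sigma_{kl}$ is individually of the form $A\,g_{kl}+B\,u_ku_l$ with $A,B$ scalar functions; the sum then has the same form, and identifying $\kappa p=A_{\rm tot}$, $\kappa(\mu+p)=B_{\rm tot}$ yields \eqref{perfect}. First I would collect the geometric facts characterising an FRW space-time that I intend to use repeatedly: (i) the Ricci tensor is itself a perfect fluid, $R_{kl}=\alpha\,g_{kl}+\beta\,u_ku_l$; (ii) the velocity field is unit timelike, $u^ku_k=-1$, shear-free, vorticity-free and geodesic, so that $\nabla_k u_l=\varphi\,(g_{kl}+u_ku_l)$; (iii) every scalar invariant is a function of cosmic time alone, hence its gradient is aligned with $u$; and (iv) the Weyl tensor vanishes, so the Riemann tensor is algebraically determined by $R_{kl}$, $R$ and $g_{kl}$. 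Since $f$ is analytic, $f_R$ and $f_\G$ are again functions of $R$ and $\G$, hence of cosmic time, and fall under (iii).

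The technical core is a Hessian identity. For any scalar $\psi$ depending on time only, (iii) gives $\nabla_l\psi=\phi\,u_l$ with $\phi$ again a time-function, and differentiating once more using (ii) together with $\nabla_k\phi=\chi\,u_k$ yields
\begin{align}
\nabla_k\nabla_l\psi = (\chi+\phi\varphi)\,u_ku_l + \phi\varphi\,g_{kl},
\label{hess}
\end{align}
which is manifestly of perfect-fluid form, as is the combination $\nabla_k\nabla_l\psi-g_{kl}\nabla^2\psi$ appearing in \eqref{Sigma}. Taking $\psi=f_R$ and $\psi=f_\G$ immediately settles the first two brackets; the terms $g_{kl}R^{pq}\nabla_p\nabla_q f_\G$ and $R_{kl}\nabla^2 f_\G$ are a scalar times $g_{kl}$ and, by (i), a perfect fluid times a scalar; and the last two lines, $-\tfrac12 g_{kl}(Rf_R+\G f_\G-f)$ and $(1-f_R)(R_{kl}-\tfrac12 g_{kl}R)$, are perfect fluids by (i).

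Two groups require genuine work. For $R_k{}^m\nabla_m\nabla_l f_\G+R_l{}^m\nabla_m\nabla_k f_\G$ I would substitute the perfect-fluid Ricci $R_k{}^m=\alpha\delta_k^m+\beta u_ku^m$ and the Hessian \eqref{hess}; contracting and using $u^mu_m=-1$ collapses the product to a symmetric combination of $u_ku_l$ and $g_{kl}$, so this group is a perfect fluid. The real obstacle is the curvature term $R_{kpql}\nabla^p\nabla^q f_\G$, where the full Riemann tensor appears. Writing $\nabla^p\nabla^q f_\G = P\,u^pu^q+Q\,g^{pq}$ from \eqref{hess}, the $Q$ part gives $Q\,g^{pq}R_{kpql}$, a multiple of the Ricci tensor handled by (i), while the $P$ part produces $R_{kpql}u^pu^q$, which a priori need not be a perfect fluid. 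Here I would invoke (iv): setting $C_{jklm}=0$ expresses $R_{kpql}$ as a linear combination of $g_{kq}R_{pl}$, $g_{kl}R_{pq}$, $g_{pq}R_{kl}$, $g_{pl}R_{kq}$ and of $R\,(g_{kq}g_{pl}-g_{kl}g_{pq})$. Contracting each with $u^pu^q$ and repeatedly using $R_{pl}u^p=(\alpha-\beta)u_l$, $R_{pq}u^pu^q=\beta-\alpha$ and $u^pu_p=-1$ reduces every piece to the span of $\{g_{kl},u_ku_l\}$, so $R_{kpql}u^pu^q$, and therefore the whole curvature term, is a perfect fluid.

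Collecting the contributions gives $\Sigma_{kl}=A_{\rm tot}\,g_{kl}+B_{\rm tot}\,u_ku_l$ with explicit time-functions $A_{\rm tot},B_{\rm tot}$ built from $\alpha,\beta,\varphi$, the Hessian scalars of \eqref{hess}, and the derivatives of $f$; reading off $p=A_{\rm tot}/\kappa$ and $\mu=(B_{\rm tot}-A_{\rm tot})/\kappa$ establishes \eqref{perfect}. I expect the conformal-flatness reduction of the Riemann term to be the delicate point, both because it is the only place where the vanishing of the Weyl tensor is essential and because care with the index positions and the curvature sign convention is needed to keep the contractions with $u$ consistent.
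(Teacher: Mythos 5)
Your proposal is correct, and its overall architecture coincides with the paper's: reduce every term of $\Sigma_{kl}$ to the span of $g_{kl}$ and $u_ku_l$, using the torse-forming property $\nabla_k u_l=\varphi(g_{kl}+u_ku_l)$, the perfect-fluid form of the Ricci tensor, and the Hessian identity for scalars whose gradient is aligned with $u$ (your Eq.~\eqref{hess} is exactly the structure the paper writes as $A\,g_{kl}+B\,u_ku_l$). The two places where you genuinely diverge are the two key sub-lemmas. (1) The alignment $\nabla_k\G\propto u_k$: you get it from your fact (iii), i.e.\ homogeneity and isotropy force every curvature scalar to be a function of cosmic time. The paper instead proves this covariantly in its Proposition: using $C_{jklm}=0$ and the perfect-fluid Ricci it derives the closed formula
\begin{equation*}
\G=\tfrac{n-3}{(n-1)(n-2)}\,[(n-4)R+2n\xi](R-2\xi)\,,
\end{equation*}
so that $\nabla_k\G$ is a combination of $\nabla_kR$ and $\nabla_k\xi$, both already known to be aligned with $u_k$. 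Your symmetry argument is shorter and applies to any curvature invariant at once, but the paper's computation also delivers $\dot\G$ explicitly, which is what feeds the closed-form coefficients. (2) The Riemann term: you expand $R_{kpql}$ via the vanishing of the Weyl tensor and contract with $u^pu^q$; the paper obtains $R_{kpql}u^pu^q=\tfrac{\xi}{n-1}(g_{kl}+u_ku_l)$ directly from the commutator $[\nabla_j,\nabla_k]u_l$ acting on the torse-forming vector, a step that needs no conformal flatness and produces the coefficient immediately. Both routes are sound; yours concentrates all use of $C_{jklm}=0$ in the Riemann decomposition, the paper's concentrates it in the Ricci and $\G$ formulas. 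Finally, note that the paper's theorem is accompanied by explicit expressions for $p$ and $\mu$ in terms of $f_{RR},f_{R\G},f_{\G\G}$, etc.; your structural argument establishes the perfect-fluid form but would still require the chain-rule evaluation of $\nabla_k\nabla_l f_R$ and $\nabla_k\nabla_l f_\G$ to exhibit those coefficients.
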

\noindent
The scalar fields $p$ and $\mu$ can be  interpreted as  effective  pressure and energy density produced by 
the geometry in the locally comoving frame ($u^0=1$).
Their explicit expressions will be  given in Eq.\eqref{expsigma} and following  ones. 

We have adopted the following notations: for a scalar $S$ we use  $\dot S = u^m\nabla_m S $, for a vector $v^k$ we write 
$v^2$ for $v^k v_k$, and $\nabla^2$ for $\nabla^k\nabla_k$. The  signature is ($-,+,\dots,+$).
Before giving the proof of the above statement, let us discuss the covariant description of physical quantities in FRW space-times.
\section{Covariant description of FRW space-times}
The perfect fluid representation of $f(R,\G)$ gravity can be achieved  in the framework of a covariant description of FRW space-times. Based on a theorem by Chen \cite{Chen14},\cite{GRWSurv17}
a FRW space-time is
characterized by the Weyl tensor being zero and by the existence of a time-like vector field $u^k$ ($u^k u_k =-1$) such that \cite{[29]}
\begin{align} \nabla_j u_k =\varphi (u_ju_k + g_{jk}) \end{align}
where $\varphi  $ is a scalar field such that $\nabla_i \varphi = - u_i \dot\varphi $.
The scalar $\varphi $ coincides with
Hubble's parameter $H=\dot a/a$ in the comoving frame, where $a(t)$ is the scale factor of the FRW metric 
in its standard warped expression. By evaluating the Riemann  tensor
\begin{equation}
R_{jkl}{}^mu_m = [\nabla_j,\nabla_k]u_l = (\dot\varphi+\varphi^2)(u_k g_{jl} - u_j g_{kl})\,, 
\end{equation}
and using it in the equation for the Weyl tensor, $0=C_{jkl}{}^mu_m $, one obtains the Ricci tensor 
of a FRW space-time \cite{MaMoJMP16}. It has the perfect fluid structure being:
 \begin{align}
R_{kl} = \frac{R-n\xi}{n-1} u_k u_l + \frac{R-\xi}{n-1} g_{kl} \label{Ricci}\,,
\end{align}
with the Ricci scalar $R=R^k{}_k$. The eigenvalue is 
\begin{equation}
\xi = (n-1)(\dot \varphi + \varphi^2) =(n-1)\ddot a/a\,.
\end{equation} 
The curvature scalar is then 
\begin{align}
R = \frac{R^*}{a^2} + (n-1)(n-2)\varphi^2 +2\xi
\end{align}
where $R^*$ is the constant (in space-time) curvature of the space-like surfaces orthogonal to $u^k$.
It solves the equation resulting from the covariant derivative of $R_{kj}u^j=\xi u_k$, that is:
\begin{align}
\dot R -2\dot\xi = -2\varphi (R - n\xi)\,.  
\end{align} 
%
An application of this formalism to a toy-model in cosmology is done in \cite{toy}.

In \cite{CMM2018} we proved the following relations, valid for generalized FRW space-times with harmonic Weyl tensor and,
in particular, in FRW space-times:
\begin{gather}
\nabla_k R = -u_k \dot R, \qquad \nabla_k\nabla_l R =  -\varphi \dot R \, g_{kl} - (\varphi \dot R-\ddot R) u_k u_l\,, \label{nabR}\\
\nabla_k \xi = -u_k \dot \xi, \qquad \nabla_k\nabla_l \xi =  -\varphi \dot \xi \, g_{kl} - (\varphi \dot \xi-\ddot \xi) u_k u_l\,. \label{nabxi} 
\end{gather}

\begin{proposition} In a FRW space-time it is:
\begin{equation}
\nabla_k \G = -u_k \dot \G, \qquad \nabla_k\nabla_l \G =  -\varphi \dot \G \, g_{kl} - (\varphi \dot \G-\ddot \G) u_k u_l 
\end{equation}
\end{proposition}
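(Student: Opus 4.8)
The plan is to reduce everything to the two relations \eqref{nabR} and \eqref{nabxi} already established for $R$ and $\xi$. The essential observation is that in a FRW space-time the Gauss-Bonnet scalar \eqref{2.2} is not an independent quantity: it is an explicit function $\G=\G(R,\xi)$ of the scalar curvature and of the Ricci eigenvalue $\xi$ alone. Once this is granted, the two displayed identities follow by the chain rule applied to $\nabla_k$ and to $\nabla_k\nabla_l$.

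First I would establish the reduction $\G=\G(R,\xi)$. Since a FRW space-time is conformally flat, its Weyl tensor vanishes and the Riemann tensor is completely determined by the Ricci tensor and the scalar curvature through the standard decomposition. Consequently the Kretschmann scalar $R_{jklm}R^{jklm}$ collapses to a linear combination of $R_{jk}R^{jk}$ and $R^2$, with coefficients depending only on $n$. Inserting this into \eqref{2.2} shows that $\G$ depends on the curvature only through $R_{jk}R^{jk}$ and $R^2$. Using now the perfect-fluid form \eqref{Ricci} of the Ricci tensor together with the contractions $u^ku_k=-1$ and $g^k{}_k=n$, one finds that $R_{jk}R^{jk}$ is a polynomial in $R$ and $\xi$. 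Hence $\G$ is an explicit function of $R$ and $\xi$, and so are its partial derivatives $\G_R$ and $\G_\xi$. This is the step I expect to carry the real content of the proposition, since it is precisely conformal flatness that prevents $\G$ from retaining independent Weyl information; everything that follows is differentiation and bookkeeping.

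For the gradient, the chain rule gives $\nabla_k\G=\G_R\nabla_kR+\G_\xi\nabla_k\xi$. Substituting the first relations in \eqref{nabR} and \eqref{nabxi} yields $\nabla_k\G=-u_k(\G_R\dot R+\G_\xi\dot\xi)$, and since $\dot\G=u^m\nabla_m\G=\G_R\dot R+\G_\xi\dot\xi$ by the same chain rule, the first identity $\nabla_k\G=-u_k\dot\G$ follows at once.

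For the Hessian I would differentiate once more, writing $\nabla_k\nabla_l\G=(\nabla_k\G_R)\nabla_lR+\G_R\nabla_k\nabla_lR+(\nabla_k\G_\xi)\nabla_l\xi+\G_\xi\nabla_k\nabla_l\xi$. Because $\G_R$ and $\G_\xi$ are again functions of $R$ and $\xi$, their gradients are proportional to $u_k$ by the argument just used, so the terms $(\nabla_k\G_R)\nabla_lR$ and $(\nabla_k\G_\xi)\nabla_l\xi$ contribute only to the $u_ku_l$ part, while $\nabla_k\nabla_lR$ and $\nabla_k\nabla_l\xi$ supply the full $g_{kl}$ and $u_ku_l$ structure through \eqref{nabR} and \eqref{nabxi}. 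Collecting the $g_{kl}$ coefficient gives $-\varphi(\G_R\dot R+\G_\xi\dot\xi)=-\varphi\dot\G$, and the remaining $u_ku_l$ coefficient equals $\ddot\G-\varphi\dot\G$, since applying the dot operator to $\dot\G=\G_R\dot R+\G_\xi\dot\xi$ reproduces exactly the second-derivative terms $\G_R\ddot R+\G_\xi\ddot\xi$ together with the cross terms $\G_{RR}(\dot R)^2+2\G_{R\xi}\dot R\dot\xi+\G_{\xi\xi}(\dot\xi)^2$ that arise from the gradient pieces. This yields the stated form $\nabla_k\nabla_l\G=-\varphi\dot\G\,g_{kl}-(\varphi\dot\G-\ddot\G)u_ku_l$; the only care needed is matching the symmetric Hessian terms against $\ddot\G$, which holds by construction.
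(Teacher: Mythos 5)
Your proposal is correct and follows essentially the same route as the paper: both use the vanishing of the Weyl tensor to reduce the Kretschmann scalar to $R_{jk}R^{jk}$ and $R^2$, then the perfect-fluid form \eqref{Ricci} of the Ricci tensor to conclude that $\G$ is a function of $R$ and $\xi$ alone, and finally differentiate via \eqref{nabR} and \eqref{nabxi}. The only cosmetic difference is that the paper computes the explicit polynomial $\G = \tfrac{n-3}{(n-1)(n-2)}[(n-4)R+2n\xi](R-2\xi)$ and differentiates it directly, whereas you keep $\G(R,\xi)$ abstract and invoke the chain rule, spelling out the Hessian bookkeeping that the paper compresses into ``follows by differentiation.''
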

\begin{proof}
It is advantageous to express the Gauss-Bonnet scalar $\G$ in terms of the Weyl scalar
\begin{equation}
R_{jklm}R^{jklm}=
C_{jklm}C^{jklm} +\frac{4}{n-2}R_{jk}R^{jk}-\frac{2}{(n-1)(n-2)}R^2\,,
\end{equation}
in the form
\begin{align*}
\G = C_{jklm}C^{jklm} - 4 \frac{n-3}{n-2} R^{kl}R_{kl} + \frac{n(n-3)}{(n-1)(n-2)} R^2 
\end{align*}
In a FRW space-time it is  $C_{jklm}=0$ then, by Eq.\eqref{Ricci}, we get:
 \begin{equation}
R_{kl} R^{kl}= \xi^2+\frac{1}{n-1} (R-\xi)^2\,.
\end{equation}
 The Gauss-Bonnet invariant reduces to
 \begin{equation}
\G =  \frac{n-3}{(n-1)(n-2)}[(n-4)R+2n\xi](R-2\xi)\,.
\end{equation}
It follows that
\begin{equation}
 \nabla_k \G = \tfrac{n-3}{(n-1)(n-2)}[2R(n-4)\nabla_k R -8n\xi \nabla_k \xi + 8\xi\nabla_k R  + 8 R \nabla_k \xi]\,.
 \end{equation}
By the properties \eqref{nabR} and \eqref{nabxi}, the first assertion is proven being
\begin{align}
\dot  \G = \tfrac{n-3}{(n-1)(n-2)}[(2nR -8R+8\xi)\dot R + 8(R-n\xi)\dot \xi ]\,.
\end{align}
The other assertion follows  by differentiation.
\end{proof}
These results allow to set  the tensor $\Sigma_{kl}$ in a perfect fluid form.  

\section{Evaluation of $\Sigma_{kl}$ and the perfect-fluid form}
For an analytical  function $f(R,\G)$ the double derivatives are evaluated as
\begin{align*}
\nabla_k\nabla_l f_R = 
&  f_{RRR} (\nabla_k R)(\nabla_l R) + f_{RR\G} [(\nabla_k R) (\nabla_l \G) + (\nabla_l R)(\nabla_k \G)]\\
&+ f_{R\G\G} (\nabla_k \G)(\nabla_l \G) + f_{RR} \nabla_k\nabla_l R + f_{R\G} \nabla_k\nabla_l \G \label{RR}\\
\nabla_k\nabla_l f_\G =&  f_{\G\G\G} (\nabla_k \G)(\nabla_l \G) + f_{R\G\G}[ (\nabla_k R)( \nabla_l \G) + (\nabla_l R)(\nabla_k \G)]\\
&+ f_{RR\G} (\nabla_k R)(\nabla_l R) + f_{\G\G} \nabla_k\nabla_l \G + f_{R\G} \nabla_k\nabla_l R\,
\end{align*}
In a Robertson-Walker space-time they gain the perfect-fluid form being:
\begin{align*}
\nabla_k\nabla_l f_R =&  -\varphi (g_{kl}+u_k u_l)(f_{RR} \dot R + f_{R\G} \dot \G)\\
&+u_k u_l (f_{RRR} \dot R \dot R  + 2f_{RR\G} \dot R \dot \G + f_{R\G\G} \dot \G\dot \G +f_{RR}\ddot R +f_{R\G}\ddot \G )  \\
\nabla_k\nabla_l f_\G =&  -\varphi (g_{kl}+u_ku_l) (f_{\G\G}\dot \G + f_{R\G} \dot R) \\
&+u_k u_l (f_{\G\G\G} \dot \G \dot \G +2 f_{R\G\G} \dot R \dot \G + f_{RR\G} \dot R\dot R + f_{R\G}\ddot R + f_{\G\G}\ddot \G )
\end{align*}
The perfect-fluid form is more evident by introducing the simplified expressions:
\begin{equation}
\nabla_k\nabla_l f_R = A_R g_{kl}+ B_R u_ku_l, \qquad 
\nabla_k\nabla_l f_\G = A_\G g_{kl} + B_\G u_ku_l \,.
\end{equation} 
In particular, we have  
\begin{equation}
\nabla^2 f_R=nA_R-B_R, \quad \mbox{and} \quad \nabla^2 f_\G=nA_\G-B_\G\,,
\end{equation}
that can be easily introduced in \eqref{Sigma}. That is 
\begin{align*}
\Sigma_{kl}=& (A_R g_{kl} + B_R u_ku_l) -g_{kl} (nA_R-B_R)\\
& + 2R (A_\G g_{kl} + B_\G u_ku_l) -2R g_{kl} (nA_\G-B_\G)\\
&-4[R_k{}^m(A_\G g_{ml}+B_\G u_m u_l)+R_l{}^m(A_\G g_{mk}+B_\G u_m u_k)]\\
& +4 R_{kl} (n A_\G-B_\G) +4g_{kl} R^{pq} (A_\G g_{pq}+B_\G u_pu_q)+4R_{kpql}(A_\G g^{pq}+B_\G u^pu^q)\\
& -\tfrac{1}{2}g_{kl}(R f_R+\G f_\G -  f) + (1-f_R) (R_{kl}-\tfrac{1}{2} g_{kl}R)
\end{align*}
Simplifying  with $R_{jk}u^k=\xi u_j$ and $R_{kpql}u^pu^q = \frac{1}{n-1}\xi(g_{kl}+u_ku_l)$, it is 
\begin{align*}
\Sigma_{kl}=& -g_{kl}(n-1)(A_R+2RA_\G) + (g_{kl}+u_ku_l )(B_R+2RB_\G)-8R_{kl}A_\G - 8 \xi B_\G u_ku_l\\
& +4 R_{kl} (nA_\G-B_\G) +4g_{kl}(RA_\G-\xi B_\G) -4 R_{kl}A_\G \\
&+\tfrac{4}{n-1}\xi B_\G(g_{kl}+u_ku_l) -\tfrac{1}{2}g_{kl}(\G f_\G -  f+R) + (1-f_R)R_{kl}\,.
\end{align*}
Finally, we obtain the  expression in a  perfect-fluid form as in \eqref{perfect}, that is:
\begin{align}\label{expsigma}
\Sigma_{kl} =& \frac{g_{kl}}{n-1} [-(n-1)^2A_R + (n-1) B_R +2(n-3)(R-2\xi)B_\G\nonumber \\
& -2(n-3)(R(n-3)+2\xi)A_\G -\tfrac{1}{2}(n-1)(\G f_\G-f+R)+(R-\xi)(1-f_R)]   \nonumber\\
&+\frac{u_ku_l}{n-1} [(n-1)B_R +2(n-3)(R-2\xi)B_\G + 4(n-3)(R-n\xi)A_\G  \\
&+(R-n\xi)(1-f_R)]\,.\nonumber
\end{align}
where:
\begin{align}
A_R = &-\varphi (f_{RR} \dot R + f_{R\G} \dot \G)\,,\\
B_R  =& f_{RRR} (\dot R)^2   + 2f_{RR\G} \dot R \dot \G + f_{R\G\G} (\dot \G )^2 -f_{RR}(\varphi \dot R -\ddot R) - f_{R\G}(\varphi \dot\G -\ddot \G)\,,\\
A_\G=&-\varphi (f_{\G\G} \dot \G + f_{R\G} \dot R)\,,\\
B_\G =& f_{\G\G\G} (\dot \G )^2 +2 f_{R\G\G} \dot R \dot \G + f_{RR\G} (\dot R)^2 - f_{R\G}(\varphi \dot R- \ddot R) -f_{\G\G}(\varphi \dot\G -\ddot \G)\,,
\end{align}
with $\dot A_R = \dot\varphi (A_R/\varphi ) - \varphi B_R$ and $\dot A_\G = \dot\varphi (A_\G/\varphi ) - \varphi B_\G$. 
The interpretation of the coefficients in Eq.\eqref{expsigma} is straightforward. The term in $g_{kl}$ is the effective pressure
\begin{align}\label{expsigma}
p =& \left(\frac{1}{n-1}\right) \left[-(n-1)^2A_R + (n-1) B_R +2(n-3)(R-2\xi)B_\G\right. \\
&\left. -2(n-3)(R(n-3)+2\xi)A_\G -\tfrac{1}{2}(n-1)(\G f_\G-f+R)+(R-\xi)(1-f_R)\right] \,, \nonumber 
\end{align}
and the term in the velocities $u_k u_l$ is the sum of the effective pressure and effective energy density
\begin{align}\label{expsigma}
p+\mu =&\left(\frac{1}{n-1}\right) \left[(n-1)B_R +2(n-3)(R-2\xi)B_\G + 4(n-3)(R-n\xi)A_\G\right.  \nonumber\\
&\left.+(R-n\xi)(1-f_R)\right]\,.
\end{align}

\section{Discussion and Conclusions}

In this paper, we  extended the results reported in \cite{CMM2018} where extra-contributions to the Einstein field equations, coming from $f(R)$ gravity, were recast as an effective perfect fluid of geometrical origin. Here we continued this ``geometrization" program by including contributions coming from the Gauss-Bonnet topological invariant $\G$ into the dynamics. This approach  improves the perspective of interpreting the dark side of the universe within the standard of geometric effects \cite{Cardo}. As for the previous case, we showed that the extra source produces a perfect-fluid term in the right hand side of the field equations. The addition of the Gauss-Bonnet invariant exhausts all 
possibilities of  fourth-order dynamics coming from curvature invariants. In fact, an action involving curvature invariants means to take into account combinations/functions of $R$, $R_{ik}$ and $R^k_{ilm}$. The inclusion of $\G$ fixes a relation among them and so $f(R,\G)$ is a general fourth-order action of gravity where all possible curvature invariants are taken into account. Clearly, the program can be enlarged by considering terms like $\Box R$,  non-local terms like $\Box^{-1} R$,  torsion invariants and so on. The perspective is to achieve a classification of perfect fluids coming from geometry. This ``geometric view"   could address the puzzle of dark side and explain why no particles beyond the Standard Model have yet been detected.  However, this statement has to be confirmed, at least, by a stringent  cosmographic analysis where possible components to the cosmic bulk  should be observationally discriminated \cite{OrlandoDunsby,ester1,ester2,revcosmo,aviles,gruber}. In a forthcoming paper, this problem will be considered.
 
\section*{Acknowledgments}
S. C.   acknowledges the support of  INFN ({\it iniziative specifiche} MOONLIGHT2 and QGSKY).
This paper is based upon work from COST action CA15117 (CANTATA), supported by COST (European 
Cooperation in Science and Technology).
%

%
\vfill
\end{document}